\newif\ifarxiv
  \def\doi#1{\url{https://doi.org/#1}}}
\newtheorem{theorem}{Theorem}
\newtheorem{proposition}[theorem]{Proposition}
\newtheorem{problem}{Problem}
\newtheorem{conjecture}[theorem]{Conjecture}
\newcounter{claimCount}
\newenvironment{claim}{\medskip
\noindent\refstepcounter{claimCount}\textbf{Claim~\arabic{claimCount}.}}{\medskip}
\newtheorem*{observation}{Observation}
\def\AA{{\cal A}}
\def\AAsixA{{\cal N}_6^\Delta}
\def\DD{{\cal D}}
\def\KK{{\cal K}}
\def\TG{{T}}
\def\qedclaim{\hfill$\triangle$}
\newcommand{\orcidID}[1]{\href{https://orcid.org/#1}{\includegraphics[scale=.03]{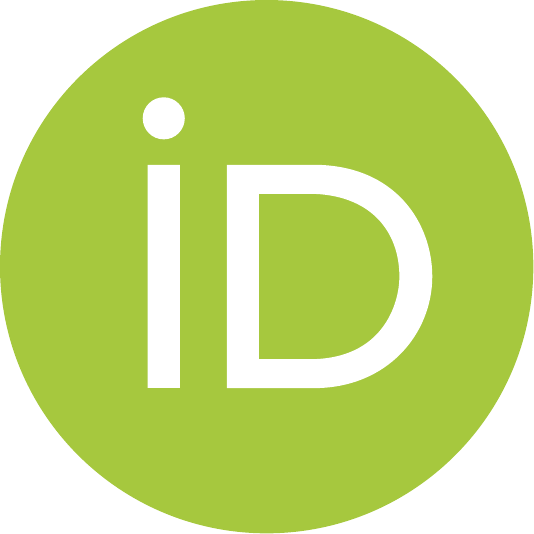}}}
\title{
    Arrangements of Pseudocircles: \\
    On Digons and Triangles\footnote{
        A part of this work was initiated at a workshop of the collaborative DACH project \emph{Arrangements and  Drawings} in Gathertown. We thank the organizers and all the participants for the  inspiring atmosphere. S.~Roch was funded by the DFG-Research-Training-Group 'Facets of Complexity' (DFG-GRK~2434). M.~Scheucher was supported by the DFG Grant SCHE~2214/1-1.
    }
}
\author{
    Stefan Felsner\orcidID{0000-0002-6150-1998}
    \and
    Sandro Roch\orcidID{0000-0002-9353-9413}
    \and
    Manfred Scheucher\orcidID{0000-0002-1657-9796}
}
\date{}
\begin{document}

\maketitle
\vspace{-1.5cm}
\begin{center}
        {\small
                Institut f\"ur Mathematik, \\
                Technische Universit\"at Berlin, Germany,\\
                \texttt{$\{$felsner,roch,scheucher$\}$@math.tu-berlin.de}
                \\
        }
\end{center}
\vspace{0.25cm}

    \begin{abstract}				
        In this article, we study the cell-structure of simple arrangements of pairwise intersecting pseudocircles. The focus will be on two problems of Gr\"unbaum (1972).
    	
        First, we discuss the maximum number of digons or touching points. Gr\"unbaum conjectured that there are at most~$2n-2$ digon cells or equivalently at most~$2n-2$ touchings.  Agarwal~et~al.~(2004) verified the conjecture for cylindrical arrangements. We show that the conjecture holds for any arrangement which contains three pairwise touching pseudocircles. The proof makes use of the result for cylindrical arrangements. Moreover, we construct non-cylindrical arrangements which attain the maximum of~$2n-2$ touchings and have no triple of pairwise touching pseudocircles.
    
        Second, we discuss the minimum number of triangular cells (triangles) in arrangements without digons and touchings. Gr\"unbaum conjectured that such arrangements have $2n-4$ triangles. Snoeyink and Hershberger (1991) established a lower bound of $\lceil \frac{4}{3}n \rceil$. Felsner and Scheucher (2017) disproved the conjecture and constructed a family of arrangements with only $\lceil \frac{16}{11}n \rceil$ triangles. We provide a construction which shows that $\lceil \frac{4}{3}n \rceil$ is the correct value.
        \medskip

        \noindent{\small\textbf{Keywords.}
        \def\and{--\ }
            arrangement of pseudocircles \and touching \and empty lense \and cylindrical arrangement \and arrangement of pseudoparabolas \and Grünbaum's conjecture
        }
    \end{abstract}
    
\section{Introduction}
    \label{sec:introduction}
	
	An \emph{arrangement $\AA$ of pairwise intersecting pseudocircles} is a collection of $n(\AA)$ simple closed curves on the sphere or plane such that any two of the curves either touch in a single point or intersect in exactly two points where they cross. 
	Throughout this article, we consider all arrangements to be \emph{simple}, that is, no three pseudocircles meet in a common point. An arrangement $\AA$ partitions the plane into cells. A cell with exactly $k$ crossings on its boundary is a \emph{$k$-cell},
	\mbox{$2$-cells} are also called \emph{digons} and $3$-cells are \emph{triangles}. The number of $k$-cells of an arrangement $\AA$ is denoted as~$p_k(\AA)$. 
	
	The study of cells in arrangements started almost a century ago when Levi~\cite{Levi1926} showed that, in an arrangement of at least three pseudolines in the projective plane, every pseudoline is incident to at least three triangles. In the 1970's, Grünbaum~\cite{Gruenbaum1972} studied arrangements of pseudolines and initiated the study of arrangements of pseudocircles.

	\subsection{Digons and touchings}
	
	Concerning digons in arrangements of pairwise intersecting pseudocircles, 
	Grün\-baum \cite{Gruenbaum1972} 
	presented a construction with $2n-2$ digons 
	(depicted in Figure~\ref{fig:gruenbaum_2n_2_digon_construction})
	and	conjectured that these arrangements have the maximum number of digons\footnote{Originally the conjecture was stated as to include non-simple arrangements which are \emph{non-trivial}, i.e., non-simple arrangements with at least 3 crossing points.}.

	\begin{conjecture}[Grünbaum's digon conjecture \hbox{\cite[Conjecture~3.6]{Gruenbaum1972}}]
		\label{conjecture:digons}
		Every simple arrangement~$\AA$ of $n$ pairwise intersecting pseudocircles has at most $2n-2$ digons, i.e., $p_2 \le 2n-2$.
	\end{conjecture}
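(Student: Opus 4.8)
The plan is to split the problem into a reduction step and a structural step. \emph{Reduction to touchings.} First I would show that it suffices to bound the number of touchings. Each digon is an empty lens bounded by two arcs of two pseudocircles meeting at two crossings, and since its interior is empty one can contract the lens, merging its two crossings into a single touching point, without changing $n$ or destroying the pairwise-intersecting property; conversely a touching can be split back into a digon. Performing such a perturbation one lens at a time (distinct digons can be treated independently) yields an arrangement with the same count and no digons, so that $p_2 \le 2n-2$ becomes equivalent to the assertion that the number $t$ of touchings satisfies $t \le 2n-2$. This is the ``equivalently at most $2n-2$ touchings'' phrasing announced in the abstract.

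\emph{Bookkeeping.} To see what actually has to be controlled I would write down Euler's formula. On the sphere, $n$ pairwise crossing pseudocircles give $V=n(n-1)$ vertices of degree $4$, $E=2n(n-1)$ edges, and $F=n(n-1)+2$ faces, whence $\sum_k (k-3)\,p_k = 2E-3F = n(n-1)-6$. The coefficient of $p_2$ here is $-1$, so arbitrarily many digons can be ``paid for'' by a few large cells: Euler's formula alone gives no upper bound on $p_2$. This is precisely why a purely combinatorial count fails and a genuinely global/geometric argument is needed.

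\emph{Structural step.} To obtain that global control I would reduce to the cylindrical case, which is already settled by Agarwal et al. I would encode the touchings in the touching graph $\TG$ on the $n$ pseudocircles, one (simple) edge per touching pair, so that the goal $t\le 2n-2$ reads $|E(\TG)| \le 2(n-1)$. The idea is to cut the arrangement open along a suitable small family of pseudocircles so that the remaining curves become a system of bi-infinite curves on a cylinder (a pseudoparabola-type, Davenport--Schinzel system) to which the cylindrical bound applies, and then to account separately for the touchings lost in the cut. The most promising handle is a triple of pairwise touching pseudocircles: their three touchings form a rigid frame along which the arrangement can be developed onto a cylinder.

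\emph{Main obstacle.} The hard part is exactly the fully general, non-cylindrical case. Without extra structure there is no canonical cylinder on which to develop the arrangement, and after the reduction the problem is to force $|E(\TG)|\le 2n-2$, a bound that sits strictly below the planar bound $3n-6$ and just beyond the bipartite-planar bound $2n-4$; capturing the additional constraints that enforce this sparsity in full generality appears to require new ideas. I therefore expect to be able to prove the statement only under an additional hypothesis --- for cylindrical arrangements (known) or, via the cut-open reduction above, for arrangements containing a pairwise touching triple --- while the unconditional conjecture remains the genuine difficulty.
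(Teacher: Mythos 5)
Your reading of the situation is accurate: this statement is a conjecture, and the paper does not prove it either --- it establishes exactly the partial result you single out, namely the case where the touching graph contains a triangle (Theorem~\ref{thm:digons}), and your outline for that case (contract digons to touchings, then use the three pairwise touchings as a rigid frame to develop the remaining curves onto a cylinder and invoke the Agarwal et al.\ pseudoparabola machinery \cite{AgarwalNPPSS2004}) is the same route the paper takes in Section~\ref{sec:thm:digons}. Your Euler-formula digression, while not in the paper, correctly diagnoses why no local counting argument can settle the question.

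However, in the special case you claim to be able to prove ``via the cut-open reduction,'' two steps are genuinely missing, and they are where the paper has to work. First, the cylinder development is not automatic: one must show that inside each of the two triangular cells of the touching triple $\KK$, pc-arcs of different types never touch and never cross twice (the paper's Claim~\ref{claim:same_arc_type}); only this permits concentrating all intersections of each arc type in small regions near the corners, after which the curves of $\AA\setminus\KK$ all wrap around a common cylinder (Claim~\ref{claim:cylindrical}). Second, and more critically, your plan to ``account separately for the touchings lost in the cut'' does not close the numerical gap. The paper replaces the three pseudocircles of $\KK$ by six new ones, making the entire arrangement cylindrical at the cost of $+3$ pseudocircles and $+3$ touchings. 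Applying the known cylindrical bound $p_2 \le 2N-2$ to the resulting arrangement on $N=n+3$ curves yields only $p_2(\AA)\le 2n+1$, and even the pseudoparabola bound $2N-4$ yields only $2n-1$. To reach $2n-2$ the paper needs the sharpened bound $2N-5$ of Claim~\ref{claim:planar_bipartite}: the touching graph of the pseudoparabola arrangement is planar and bipartite, \emph{and} the uppermost and lowermost curves lie in different color classes while being non-adjacent, so an extra edge $\{1,n\}$ can be added with the graph remaining planar and bipartite. Without this extra-edge refinement, the approach you sketch falls short of the conjectured bound by at least one, so the accounting step is a real gap rather than a routine bookkeeping exercise.
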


    \begin{figure}[htb]
      \centering
      \includegraphics{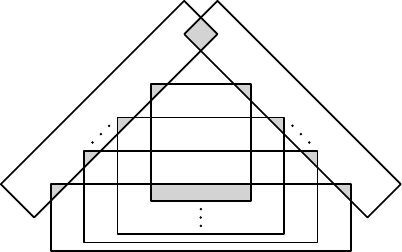}
      \caption{An arrangement of $n\ge 4$ pairwise intersecting pseudocircles with exactly $2n-2$ digons. Digons are highlighted gray (Example copied from Gr\"unbaum~\cite[Figure~3.28]{Gruenbaum1972}).}
  \label{fig:gruenbaum_2n_2_digon_construction}
    \end{figure}	

    It was shown by Agarwal et al.\ \cite{AgarwalNPPSS2004} that Conjecture \ref{conjecture:digons} holds for simple cylindrical arrangements.

	An intersecting arrangement of pseudocircles is \emph{cylindrical} if there is a pair of cells which are separated by each pseudocircle of the arrangement. 
	An \emph{intersecting arrangement of pseudoparabolas} is a collection of infinite $x$-monotone curves, called \emph{pseudoparabolas}, where each pair of them either have a single touching or intersect in exactly two points where they cross. Every cylindrical arrangement of pseudocircles can be represented as an arrangement of pseudoparabolas and vice versa.
    	From an arrangement of pseudoparabolas one can directly obtain a drawing of
	an arrangement of pseudocircles on the lateral surface of a cylinder so that the pseudocircles wrap around the cylinder. 
	The two separating cells correspond to the top and the bottom of the cylinder.
 The top and bottom cell may have degree two, they account for the difference of two between the conjecture and the result for pseudo-parabolas.
 
   More precisely  Agarwal et al.~\cite{AgarwalNPPSS2004} showed that the number of touchings in an intersecting arrangement of~$n$~pseudo-parabolas is at most $2n-4$ [Theorem~2.4] and in an intersecting cylindrical arrangement it is at most $2n-3$ [Corollary~2.12]. They extended this by showing that intersecting arrangements of pseudocircles the number of digons is at most linear in~$n$~[Theorem~2.13].
	The proof of the last result is based on the fact that every arrangement of intersecting pseudocircles can be stabbed by constantly many points. That is, there exists an absolute constant~$k$, called the \emph{stabbing number}\footnote{In the literature, the stabbing number is also referred to as \emph{piercing number} or \emph{transversal number}.}, such that for every arrangement of $n$ pseudocircles in the plane there exists a set of $k$ points with the property that each pseudocircle contains at least one of the points in its interior region~\cite[Corollary~2.8]{AgarwalNPPSS2004}. Therefore, the arrangement can be decomposed into constantly many cylindrical subarrangements.
	The linear upper bound then follows from the fact that each pair of subarrangements contributes at most linearly many digons. 
	Grünbaum's digon conjecture is known to hold for arrangements with up to $7$ pseudocircles; see~\cite{FelsnerScheucher2020}. 

 The conjecture has also been studied for proper circles. Alon et al.~\cite{AlonLPS01}
 proved that every arrangement of $n$ pairwise intersecting circles contains 
 at most $20n-2$ digons. For arrangements of pairwise intersecting unit circles, Pinchasi~\cite{Pinchasi2002} proved that there are at most~$n + 3$ digons.
 Very recently Ackerman et al.~\cite{AckermanDKPR24} verified Grünbaums digon conjecture for circles.
 
	\medskip
	
In this paper we show that Grünbaum's digon conjecture (Conjecture~\ref{conjecture:digons}) holds for arrangements which contain three pseudocircles that pairwise form a digon. Before we state the result as a theorem, let us introduce some notation. For an arrangement $\AA$ of pseudocircles and any selection of its digons, we can perform a perturbation so that the selected digons become touching points. Figure~\ref{fig:digon_contract} gives an illustration. It is therefore sufficient to find an upper bound on the number of touchings to prove Grünbaum's digon conjecture. We define the \emph{touching graph}~$\TG(\AA)$ to have the pseudocircles of $\AA$ as vertices, and two vertices form an edge if the two corresponding pseudocircles touch.
	
\begin{figure}[htb]
    \centering
    \includegraphics[page=2]{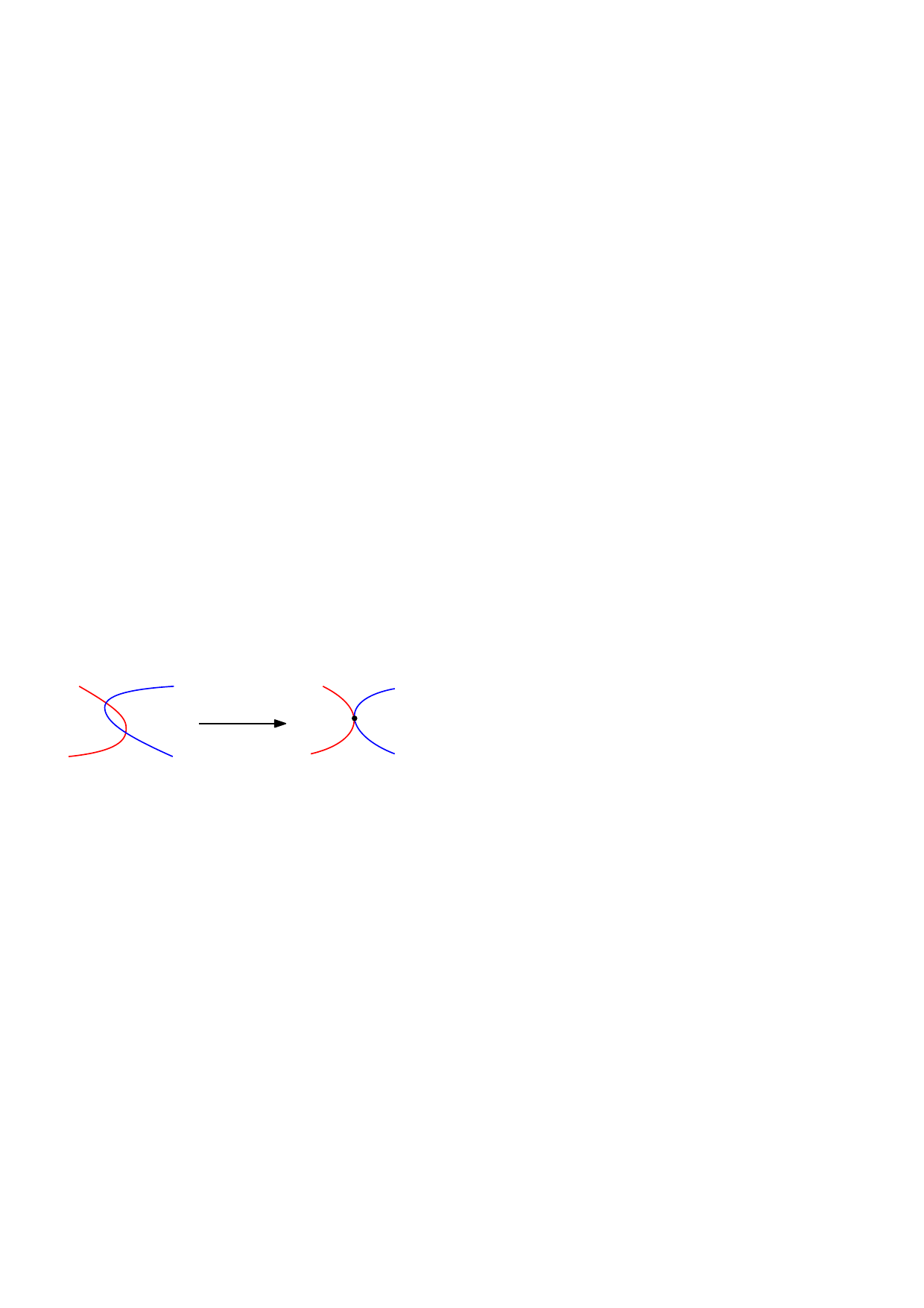}
    \caption{Contracting some digons to touchings.}
    \label{fig:digon_contract}
\end{figure}

\begin{restatable}{theorem}{thmDigons}
    \label{thm:digons}
    Let $\AA$ be a simple arrangement of $n$ pairwise intersecting pseudocircles. If the touching graph $\TG(\AA)$ contains a triangle, then there exist at most~$2n-2$ touchings, i.e.,~\mbox{$p_2(\mathcal{A}) \le 2n-2$}.
\end{restatable}
	
Theorem \ref{thm:digons} in particular shows that Grünbaum's construction with $2n-2$ touchings is maximal for arrangements with triangles in the touching graph. However, the maximum number of touchings in general arrangements remains unknown. In Section~\ref{sec:prop:trianglefree_tight} we construct a family of arrangements of $n$ pairwise intersecting pseudocircles which have exactly $2n-2$ touchings and a triangle free touching graph. This family witnesses that the conjectured upper bound (Conjecture \ref{conjecture:digons}) can also be achieved in the cases not covered by Theorem~\ref{thm:digons}. 
	
\begin{restatable}{proposition}{propTriangleFreeTight}
    \label{prop:trianglefree_tight}
    For $n\in\{11, 14, 15\}$ and $n \geq 17$ there exists a simple arrangement $\AA_n$ of~$n$ pairwise intersecting pseudocircles with no triangle in the touching graph $\TG(\AA_n)$ and with exactly $p_2(\AA_n)=2n-2$ touchings.
\end{restatable}

\subsection{Triangles in digon-free arrangements}
	
    In this context we assume that all arrangements are digon- and touching-free. It was shown by Levi \cite{Levi1926} that every arrangement of $n$ pseudolines in the projective plane contains at least~$n$ triangles. Since arrangements of pseudolines are in correspondence with arrangements of \emph{great-pseudocircles} (see e.g.\ \cite[Section~4]{FelsnerScheucher2019}), it follows directly that an arrangement of~$n$ great-pseudocircles contains at least~$2n$ triangles, i.e.,~$p_3 \ge 2n$.
	
    Grünbaum conjectured that every digon-free intersecting arrangement on~$n$ pseudocircles contains at least $2n-4$ triangles \cite[Conjecture~3.7]{Gruenbaum1972}. Snoeyink and Hershberger \cite{SnoeyinkHershberger1991} proved a sweeping lemma for arrangements of pseudocircles. Using this powerful tool, they concluded that in every digon-free intersecting arrangement every pseudocircle has two triangles on each of its two sides (interior and exterior). This immediately implies the lower bound~$p_3(\AA) \ge 4n/3$; see Section~4.2 in \cite{SnoeyinkHershberger1991}.
	
    An infinite family of  digon-free arrangements of intersecting pseudocircles with~\mbox{$p_3 < \frac{16}{11}n$} was constructed in~\cite{FelsnerScheucher2020}. This family shows that Grünbaum's conjecture is wrong. With computer assistence~\cite{FelsnerScheucher2020}, it was also verified that the lower bound $p_3 \ge 4n/3$ is tight for~\mbox{$6 \le n \le 14$}. Here we show that the bound is tight for all $n\ge6$:
	
    \begin{restatable}{theorem}{thmTriangles}\label{thm:triangles}
        For every $n \ge 6$, there exists a simple digon-free arrangement $\AA_n$ of $n$ pairwise intersecting pseudocircles with $p_3(\AA_n)=\lceil \frac{4}{3}n \rceil$ triangles. Moreover, these arrangements are cylindrical.
    \end{restatable}

    The construction that we use for proving Theorem~\ref{thm:triangles} is based on replacing pseudocircles by bundles of pseudocircles. Starting from any digon-free arrangement, we can extend it to a counterexample of Grünbaum's triangle conjecture:

    \begin{restatable}{theorem}{thmExtensionToCounterexample}\label{thm:extension_to_counterexample}
    For $\varepsilon > 0$ fixed, every digon-free arrangement $\AA$ of $n$ pairwise intersecting pseudocircles is contained as a subarrangement in a digon-free arrangement $\hat{\AA}$ of $\hat{n}$ pairwise intersecting pseudocircles with $p_3(\hat{\AA}) < (\frac{4}{3} + \varepsilon)\hat{n}$.
    \end{restatable}

    For $n=6$ there is a unique intersecting digon-free arrangement which minimizes the number of triangles. This arrangement $\AAsixA$ (see Figure~\ref{fig:AAsixA})
    has been shown to be non-circularizable 
    in~\cite{FelsnerScheucher2019}, i.e., $\AAsixA$ cannot be represented as an arrangement of proper circles. 
    All counterexamples to Grünbaums conjecture presented in~\cite{FelsnerScheucher2020} as well as
    the arrangements constructed in the proof of Theorem~\ref{thm:triangles} contain  $\AAsixA$ as a subarrangement. Hence,
    all these arrangements 
    are non-circularizable. 
    The following weakening of the original 
    Grünbaum conjecture has been stated as 
 Conjecture~2.2 in~\cite{FelsnerScheucher2020}.
 
    \begin{conjecture}[Weak Grünbaum triangle conjecture]\label{conj:weakGB}
        Every simple digon-free arrangement~$\AA$ of $n$ pairwise intersecting circles has at least~$2n-4$ triangles.
    \end{conjecture}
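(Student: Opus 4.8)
The plan is to leverage the fact that the curves are \emph{proper} circles rather than pseudocircles, since for pseudocircles the bound is false: the construction of Theorem~\ref{thm:triangles} attains only $\lceil \tfrac43 n\rceil$ triangles, which is strictly below $2n-4$ once $n\ge 8$. The sweeping lemma of Snoeyink and Hershberger \cite{SnoeyinkHershberger1991} already yields two triangles on each side of every circle and hence $p_3\ge 4n/3$; any proof of the stronger bound must therefore exploit genuinely metric/projective properties of circles, because \emph{locally} circle and pseudocircle arrangements are indistinguishable. My approach is to pass to the standard linearization: lift each circle $(x-a)^2+(y-b)^2=r^2$ to the plane $z=2ax+2by+(r^2-a^2-b^2)$ and intersect with the paraboloid $z=x^2+y^2$, so that $\AA$ becomes the trace on the paraboloid of an arrangement of $n$ planes in $\mathbb{R}^3$. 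In this rank-$4$ picture the triangles acquire an interpretation in terms of faces of a spatial hyperplane arrangement, where oriented-matroid and convexity tools (in particular Shannon-type lower bounds on simplicial cells) become available.

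Concretely I would proceed as follows. First, translate each triangle of $\AA$ into a combinatorial face of the lifted plane arrangement and identify which (near-)simplicial cells of the spatial arrangement project to triangles on the paraboloid. Second, exploit M\"obius invariance: since triangles and the entire cell structure are preserved by M\"obius transformations, I may send a chosen crossing to infinity, turning two circles into lines and normalizing the configuration; iterating this peels off a circle and sets up an incremental sweep in which a pencil of circles through a fixed point is rotated. The aim of the sweep is to show that proper circles force the creation of more triangles than a pseudocircle sweep would---charging each combinatorial event of a genuine circle pencil to a triangle, thereby improving the per-circle yield from the $4/3$-efficiency of Snoeyink--Hershberger to the $2$-per-circle efficiency needed for $2n-4$. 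Third, I would use circle-specific incidences---most notably that the three radical axes of any three pairwise-intersecting circles concur at the radical center---to rule out the ``maximally shared'' triangle patterns that make the pseudocircle bound tight; these patterns are precisely those realized by the non-circularizable arrangement $\AAsixA$ contained in every known pseudocircle counterexample. Finally the small cases (starting at $n=6$, where $2n-4=\lceil\tfrac43 n\rceil=8$) would be checked directly, e.g.\ by the exhaustive methods of \cite{FelsnerScheucher2020}.

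The main obstacle, and the reason the statement is still only conjectured, is converting the \emph{global} realizability property ``circularizable'' into a \emph{local} counting obstruction. Every known extremal pseudocircle arrangement contains the non-circularizable $\AAsixA$, so morally the few-triangle configurations are exactly the non-circularizable ones; but no purely local or Euler-formula argument can separate circles from pseudocircles, and so the proof must carry the rank-$4$ paraboloid structure all the way into the triangle count. Marrying this global, circle-specific geometry to a tight localized charging scheme---strong enough to gain the extra roughly $2n/3$ triangles over Snoeyink--Hershberger, yet robust against the degenerate behaviour near $n=6$---is where I expect the real difficulty to lie.
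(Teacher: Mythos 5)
You have correctly sensed what the paper itself makes explicit: this statement is an \emph{open conjecture}, not a theorem. The paper offers no proof of it -- it is stated precisely because the counterexamples to Gr\"unbaum's original triangle conjecture (including the $\lceil \frac{4}{3}n\rceil$ construction of Theorem~\ref{thm:triangles}) all contain the non-circularizable arrangement $\AAsixA$, so the $2n-4$ bound \emph{might} survive for proper circles. Your observations to this effect, and your arithmetic at the boundary cases (equality of $2n-4$ and $\lceil \frac{4}{3}n\rceil$ at $n=6,7$, strict separation from $n=8$ on), are accurate.

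However, as a proof attempt your proposal has a gap that is not a technical flaw in one step but the absence of every step: it is a research program, and each of its three pillars is asserted rather than established. The paraboloid lift does turn circles into planes, but you never say which simplicial cells of the rank-$4$ arrangement correspond to triangles of the planar arrangement, nor why a Shannon-type count in $\mathbb{R}^3$ would transfer -- triangles of $\AA$ are faces of the induced cell complex \emph{on the paraboloid}, not cells of the plane arrangement, and no known simplicial-cell bound applies to such traces. The sweep step is where the claimed gain from $\frac{4}{3}$-efficiency to $2$-per-circle must occur, yet the sweep you describe (rotating a pencil of circles through a point, after a M\"obius normalization) is combinatorially the same tool Snoeyink and Hershberger~\cite{SnoeyinkHershberger1991} already developed for \emph{pseudocircles}; since their sweep works in the pseudocircle category, it cannot by itself distinguish circles from pseudocircles, and you give no charging argument that extracts extra triangles from circularity. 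Likewise, the radical-center concurrence is a genuine circle-specific incidence, but you do not exhibit any implication from it to the cell structure, and ``the few-triangle configurations are morally the non-circularizable ones'' overreaches what is known: the paper only records that all \emph{known} counterexamples contain $\AAsixA$, not that low triangle counts force non-circularizability. In short, the difficulty you correctly diagnose in your final paragraph -- converting the global property of circularizability into a local counting obstruction -- is exactly the missing proof, and nothing in the proposal closes it.
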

	
To prove the conjecture it would be enough to verify that every simple intersecting digon-free arrangement of $n$ pseudocircles with less than $2n-4$ triangles contains $\AAsixA$ as a subarrangement. This, however, is wrong.
There are counterexamples to Grünbaum's conjecture 
without~$\AAsixA$ as a subarrangement. In Subsection~\ref{ssec:no-N6}, we prove the following proposition and discuss additional constructions.

\begin{restatable}{proposition}{propNoSubNSixFamily}\label{prop:nosubN6family}
    There is an infinite family of simple intersecting digon-free arrangements of~$n$ pseudocircles with~$\lceil\frac{5}{3}n\rceil + 2$ triangles which have no subarrangement isomorphic to $\AAsixA$.
\end{restatable}

\subsection{Related Work and Discussion}

In the proof of Theorem~\ref{thm:digons},
	we make use of a triangle ($K_3$) in the touching graph
	to bound the number of digons in the arrangement.
	It would be interesting to know whether other subgraphs like $C_4$ or $K_{3,3}$ can also be used to bound the number of digons. 
	
	The focus of this article is on arrangements of pairwise intersecting pseudocircles.
	For the setting of arrangements, where pseudocircles do not necessarily pairwise intersect,
	a classical construction of Erd\H{o}s~\cite{Erdos1946}
	gives arrangements of $n$ unit circles
	with $\Omega(n^{1+c/\log\log n})$ touchings. 
	An upper bound of $O(n^{3/2+\epsilon})$ on the number of digons in \emph{circle} arrangements was shown
	by Aronov and Sharir~\cite{AronovSharir2002}.
	The precise asymptotics, however, remain unknown.
	Moreover, we are not aware of an upper bound for \emph{pseudocircles}.
	\begin{problem}
    	Determine the maximum number of touchings among all simple non-inter\-secting arrangements 
            of~$n$ circles and pseudocircles, respectively.
	\end{problem}
	
	\medskip
	
	Concerning intersecting arrangements with digons, the number of triangles behaves  
	different than in digon-free arrangements. While the lower 
        bound is $p_3 \ge 2n/3$, we know that in the range of
        $3\le n \le 7$ the correct bound is $p_3 \ge n-1$, this was obtained using a computer-assisted exhaustive enumeration~\cite{FelsnerScheucher2020}.
	This motivates the conjecture:
	
	\begin{conjecture}[{}{\cite[Conjecture~2.10]{FelsnerScheucher2020}}]
		\label{conj:n_minus_1_triangles}
		Every simple arrangement of $n \ge 3$ pairwise intersecting pseudocircles has at least $n-1$ triangles, i.e., $p_3 \ge n-1$.
	\end{conjecture}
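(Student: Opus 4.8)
The plan is to prove the bound by induction on the number $n$ of pseudocircles, taking the computer-verified range $3\le n\le 7$ from \cite{FelsnerScheucher2020} as the base case. Since a touching and a digon are interchangeable by the perturbation of Figure~\ref{fig:digon_contract}, I would first reduce to arrangements in which every pair of pseudocircles crosses in two points, so that all $2$-cells are genuine digons and the whole cell structure is controlled by Euler's formula on the sphere. For the inductive step I want to delete a single pseudocircle $C$ and compare $p_3(\AA)$ with $p_3(\AA\setminus C)$; the heart of the argument is therefore an exact local accounting of how the number of triangles changes under such a deletion.

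That accounting is pleasantly clean. Deleting $C$ erases its arcs and, across each arc $a$ of $C$, merges the two flanking cells $A$ and $B$ into one; if $A$ and $B$ carry $j$ and $\ell$ crossings on their boundaries, the merged cell carries $j+\ell-4$. Hence a triangle of $\AA$ is destroyed precisely when it is incident to $C$, and a new triangle is born across $a$ precisely when $j+\ell=7$. Checking the few relevant cases shows that the only deletions which create a triangle without simultaneously destroying one happen across an arc flanked by a digon and a pentagon. Writing $N(C)=p_3(\AA)-p_3(\AA\setminus C)$ and summing the per-arc contributions over all pseudocircles yields
\[
 \textstyle\sum_{C} N(C) \;=\; 3\,p_3(\AA) - m_7,
\]
where $m_7$ is the number of edges whose two incident cells have sizes summing to $7$. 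If I can exhibit a single pseudocircle with $N(C)\ge 1$, then induction gives $p_3(\AA)\ge p_3(\AA\setminus C)+1\ge (n-2)+1=n-1$, as desired.

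The crux is to guarantee such a pseudocircle. Averaging over the identity above only succeeds when $3\,p_3(\AA)>m_7$, which may fail in exactly the hard regime where $p_3$ is small, since many digon--pentagon adjacencies can make $m_7$ large. The main obstacle is thus to rule out \emph{balanced} arrangements in which every pseudocircle, upon deletion, creates at least as many triangles as it destroys. To break such configurations I would invoke the sweeping lemma of Snoeyink and Hershberger \cite{SnoeyinkHershberger1991} to locate, on a suitable pseudocircle, a triangle whose opposite cell across $C$ is not a digon, so that its deletion is a genuine net loss; the remaining, digon-rich arrangements I would attack separately, using the linear bound on the number of digons \cite{AgarwalNPPSS2004} (and $p_2\le 2n-2$ where Theorem~\ref{thm:digons} applies) together with the known estimate $p_3\ge 2n/3$ to show that a truly balanced arrangement would force more digons than are allowed. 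I expect this dichotomy --- every arrangement either admits a net-loss deletion or is so digon-heavy that it contradicts the digon bound --- to be the decisive and most delicate step, which is presumably why the conjecture has so far resisted a short proof.
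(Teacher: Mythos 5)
First, a point of calibration: the statement you are trying to prove is not a theorem of this paper. It is stated as Conjecture~\ref{conj:n_minus_1_triangles} and explicitly left open; the paper's only supporting evidence is a computer-assisted exhaustive verification for $3\le n\le 7$ \cite{FelsnerScheucher2020}, which is exactly the range you take as your induction base. So there is no proof in the paper to compare against, and your proposal should be judged as a proof attempt of an open problem. As such it is a program, not a proof, and you say so yourself: the entire argument funnels into the claim that every arrangement admits a pseudocircle $C$ with $N(C)\ge 1$, or else is ``digon-rich'' in a way that can be refuted by counting --- and neither branch of that dichotomy is established. Worse, both tools you propose for it fail in precisely the hard regime. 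The Snoeyink--Hershberger conclusion that every pseudocircle has two triangles on each side is proved only for \emph{digon-free} arrangements; in the digon-rich arrangements you need to handle (e.g.\ Grünbaum's construction with $2n-2$ digons, Figure~\ref{fig:gruenbaum_2n_2_digon_construction}), a pseudocircle may support no triangle at all, so the sweeping lemma cannot be invoked to locate a net-loss deletion there. And the digon bound you want to play against $p_3\ge 2n/3$ is itself problematic: $p_2\le 2n-2$ is Grünbaum's open Conjecture~\ref{conjecture:digons} (Theorem~\ref{thm:digons} gives it only when the touching graph contains a triangle), while the unconditional linear bound of Agarwal et al.\ \cite{AgarwalNPPSS2004} carries a large unspecified constant coming from the stabbing-number decomposition, so no explicit contradiction can be extracted from ``balanced implies too many digons'' with the constants currently known. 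Note also that every edge of every triangle bordered by a quadrilateral contributes to $m_7$, so $m_7$ can be of order $3p_3$ even with no digons at all, and the averaging inequality $3p_3>m_7$ has no reason to hold.

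There is also a technical flaw in the local accounting itself. The formula ``cells of sizes $j$ and $\ell$ flanking an arc merge into a cell of size $j+\ell-4$'' presupposes that the two flanking cells are distinct and that each meets $C$ along a single arc. In general a cell can be adjacent to $C$ along several arcs, in which case deleting $C$ cascades whole chains of cells into one region and the per-arc, pairwise bookkeeping double-counts corners; the identity $\sum_C N(C)=3p_3-m_7$ is therefore only valid under a genericity assumption you have not justified (and which fails in small examples). None of this says the deletion-induction strategy is hopeless --- it is a natural line of attack, and your reduction of touchings to crossings via the perturbation of Figure~\ref{fig:digon_contract} is legitimate --- but the decisive step you flag as delicate is in fact missing, and with it the proof. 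The honest summary is that you have rediscovered why the conjecture is open, not resolved it.
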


   Dropping the condition that the arrangements are intersecting seems non-interesting at first:
   arrangements of pairwise non-intersecting circles have no triangles. If we add the condition that the 
   intersection graph of the circles is connected we still have arrangements with a bipartite intersection graph where all faces are of even length, hence, there are no triangles. In the case where the intersection graph is connected and digons are forbidden triangles are unavoidable, in fact for all $n\geq 3$ there are arrangements in this class with only~8 triangles and this is the minimum. 
   
	Concerning the maximum number of triangles in intersecting arrangements, Felsner and Scheucher~\cite{FelsnerScheucher2020} 
	have shown an upper bound $p_3 \le \frac{4}{3} \binom{n}{2} + O(n)$
	which is optimal up to a linear error term.
	In fact,
	while $\frac{4}{3} \binom{n}{2} $ is an upper bound for arrangements of great-pseudocircles,
	we found an intersecting arrangement ($n=7$)
	with no digons, no touchings,
	and $29 = \frac{4}{3} \binom{7}{2}+1$ triangles. We are not aware of an infinite family of such arrangements.
\begin{problem}
  Determine the maximum number of 
  triangles of simple arrangements of~$n$ pairwise intersecting pseudocircles.
\end{problem}

	\section{Proof of Theorem~\ref{thm:digons}}
	\label{sec:thm:digons}

\thmDigons*

\begin{proof}
	
	Since the touching graph $\TG(\AA)$ contains a triangle, there are three pseudocircles in $\AA$ that pairwise touch. 
	Let $\KK$ be the subarrangement induced by these three pseudocircles 
	and let~$\triangle$ and~$\triangle'$ denote the two open triangle cells in~$\KK$. We label the three touching points, which are also the vertices of $\triangle$ and $\triangle'$, as $a, b, c$. Furthermore, we label the three boundary arcs of $\triangle$ (resp.~$\triangle'$) as $\alpha, \beta, \gamma$ (resp.~$\alpha', \beta', \gamma'$), as shown in Figure~\ref{fig:K3_labels_a}.
		
	\begin{figure}[htb]
	\centering
	
	\begin{subfigure}[b]{.32\textwidth}
		\centering
		\includegraphics[page=3]{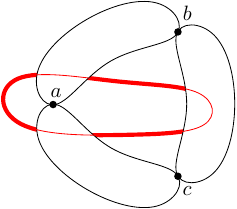}
		\caption{}
		\label{fig:K3_labels_a}
	\end{subfigure}
	\hfill
	\begin{subfigure}[b]{.32\textwidth}
		\centering
		\includegraphics[page=1]{figs/K3_two_types}
		\caption{}
		\label{fig:K3_labels_b}
	\end{subfigure}
	\hfill
	\begin{subfigure}[b]{.32\textwidth}
		\centering
		\includegraphics[page=2]{figs/K3_two_types}
		\caption{}
		\label{fig:K3_labels_c}
	\end{subfigure}
	
	\caption{
		\subref{fig:K3_labels_a}~An illustration of the  subarrangement~$\KK$.
		\subref{fig:K3_labels_b}~and~\subref{fig:K3_labels_c}, respectively, illustrate an additional pseudocircle $C$ (red),
		the pc-arcs inside $\triangle\cup\triangle'$ are highlighted.
	}
	\label{fig:K3_labels}
	\end{figure}

	Assume that all digons in $\AA$ are contracted to touchings. In the following, the arrangement in Figure~\ref{fig:run_example:original_arrangement} will serve as a running example for~$\AA$. The intersection of a pseudocircle~$C\in\AA \setminus \KK$ with $\triangle \cup \triangle'$ results in three connected segments, which we denote as the three \emph{pc-arcs} of~$C$; see Figures~\ref{fig:K3_labels_b} and~\ref{fig:K3_labels_c}. 
	Note that 
	two of the pc-arcs induced by~$C$ may share an endpoint 
	if~$C$ forms a touching with one of the pseudocircles from~$\KK$; in the example arrangement in Figure~\ref{fig:run_example:original_arrangement}, this occurs~$5$~times on the boundary of~$\triangle$ and~$4$~times on the boundary of~$\triangle'$. 
	
	Each pc-arc 
	in $\triangle$ connects two of $\alpha, \beta$ or $\gamma$ while a pc-arc in  $\triangle'$ connects two of $\alpha', \beta'$ and~$\gamma'$. Depending on the boundary arcs on which they start and end, they belong to one of the types~$\alpha\beta$, $\beta\gamma$, $\alpha\gamma$, $\alpha'\beta'$, $\beta'\gamma'$ or $\alpha'\gamma'$. Figure~\ref{fig:run_example:original_triangles} shows the regions of~$\triangle$ and~$\triangle'$ together with the pseudocircles passing through them; Figure~\ref{fig:run_example:arc_type_triangles} shows the same regions but the arcs are colored according to their type in blue~($\alpha\beta$,~$\alpha'\beta'$), red~($\beta\gamma$,~$\beta'\gamma'$) or blue~($\alpha\gamma$,~$\alpha'\gamma'$).

	\begin{claim}
	\label{claim:same_arc_type}
	If two pc-arcs inside $\triangle$ (resp.\ $\triangle'$) have a touching or cross twice, then they are of the same type. 
	\end{claim}
	
    \noindent\textit{Proof of Claim~\ref{claim:same_arc_type}.} We prove the claim for $\triangle$; the argument for $\triangle'$ is the same.
	Suppose towards a contradiction that two distinct pseudocircles $C, C'$ from $\AA\setminus\KK$ contain pc-arcs~$A\subset C\cap\triangle$ and~$A'\subset C'\cap\triangle$ of different types that have a touching or cross twice.  
	For simplicity, consider only the arrangement induced by the five pseudocircles $\mathcal{K}\cup\{C, C'\}$. By symmetry we may assume that $A$ is of type $\alpha\gamma$ and $A'$ is of type $\alpha\beta$. We may further assume that $A$ and $A'$ have a touching, since otherwise, if they cross twice, they form a digon and we can contract it. This allows us to distinguish four cases which are depicted in Figure~\ref{fig:K3_arc_type_proof} (up to further possible contractions of digons formed between $C$ and the pseudocircles~of~$\mathcal{K}$).
	
	\textbf{Case~1:} 
	$C$ separates $a$ from $b$ and $c$. 
	
	\textbf{Case~2:} 
	$C$ separates $b$ from $a$ and $c$. 
	
	\textbf{Case~3:} 
	$C$ separates $c$ from $a$ and $b$. 
	
	\textbf{Case~4:} 
	$C$ does not separate $a, b, c$.
	
	\begin{figure}[htb]
		\centering
		\hbox{}
		\hfill
		\begin{subfigure}[b]{.35\textwidth}
			\centering
			\includegraphics[page=2]{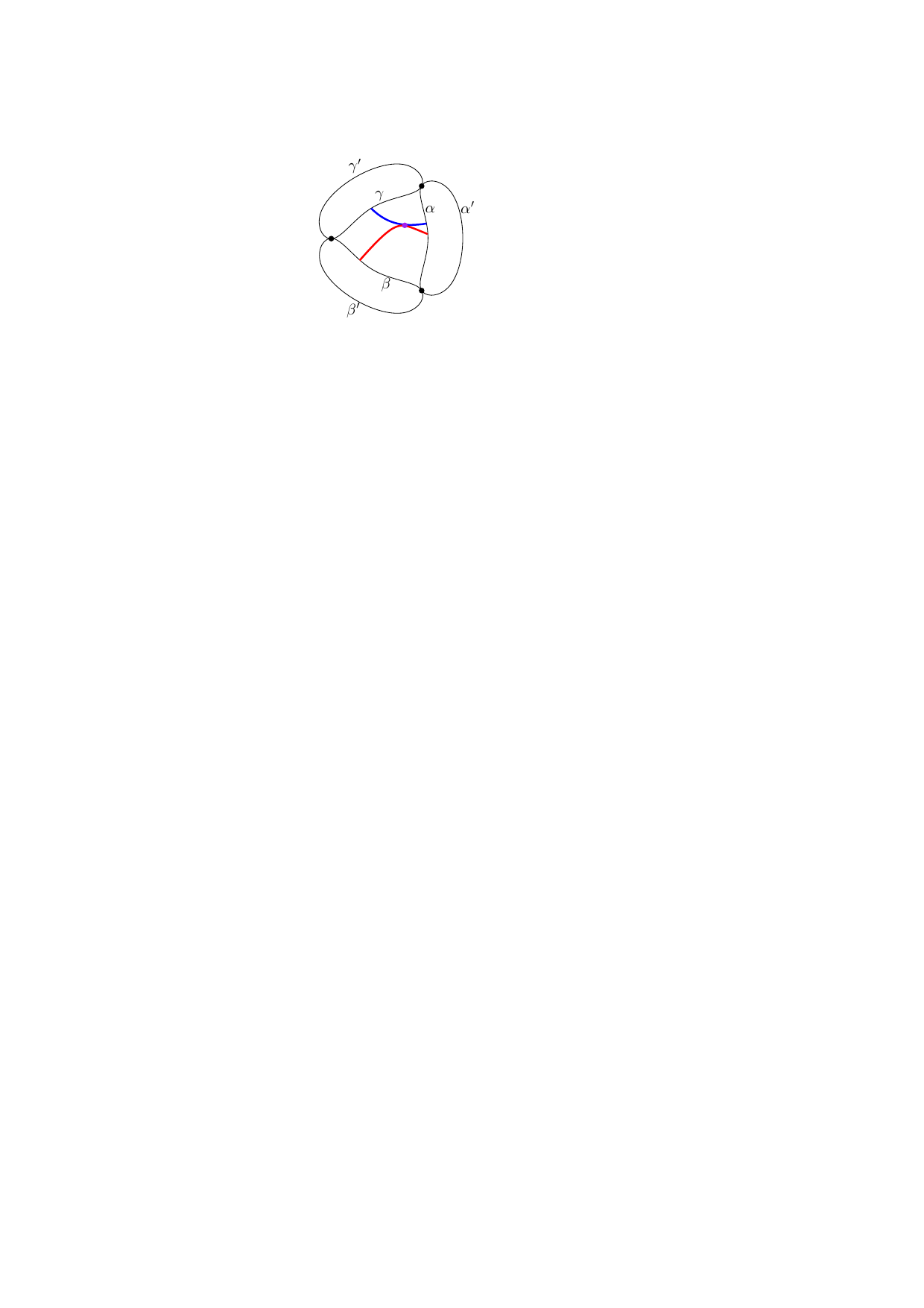}
			\caption{}
			\label{fig:K3_arc_type_proof_1}
		\end{subfigure}
		\hfill
		\begin{subfigure}[b]{.35\textwidth}
			\centering
			\includegraphics[page=3]{figs/K3_arc_type_proof}
			\caption{}
			\label{fig:K3_arc_type_proof_2}
		\end{subfigure}
		\hfill
		\hbox{}
		
		\hbox{}
		\hfill
		\begin{subfigure}[b]{.35\textwidth}
			\centering
			\includegraphics[page=4]{figs/K3_arc_type_proof}
			\caption{}
			\label{fig:K3_arc_type_proof_3}
		\end{subfigure}
		\hfill
		\begin{subfigure}[b]{.35\textwidth}
			\centering
			\includegraphics[page=5]{figs/K3_arc_type_proof}
			\caption{}
			\label{fig:K3_arc_type_proof_4}
		\end{subfigure}
		\hfill
		\hbox{}
		
		\caption{
			\subref{fig:K3_arc_type_proof_1}--\subref{fig:K3_arc_type_proof_4} illustrate Cases~1--4 from the proof of Claim~\ref{claim:same_arc_type}.
			The pseudocircles $C$ and $C'$ are highlighted blue and red, respectively. The pc-arcs $A$ and $A'$ are emphasized.
		}
		\label{fig:K3_arc_type_proof}
	\end{figure}

	\noindent
	In the next paragraph we show that in neither case is it possible to extend the arc $A'$ 
	to a pseudocircle $C'$ intersecting the three pseudocircles of $\mathcal{K}$. This is a contradiction. 
	
	Extend $A'$ starting from its endpoint on $\alpha$. The only way to reach $\gamma$ or~$\gamma'$, avoiding an invalid, additional intersection with $C$, is via the pseudocircle  \mbox{$\beta \cup \beta'$}. But the other endpoint of $A'$ already lies on $\beta$, so either the pseudocircle extending $A'$ has at least three intersections with $\beta \cup \beta'$ or it misses 
	$\gamma\cup\gamma'$. Both are prohibited in an intersecting arrangement extending~$\mathcal{K}$.
 
	This completes the proof of Claim~\ref{claim:same_arc_type}.\qedclaim\medskip

	Next we transform $\AA$ into another intersecting arrangement~$\AA'$ by redrawing the pc-arcs 
	within $\triangle$ and~$\triangle'$ such that the pairwise intersections and touchings are preserved 
    and all crossings and touchings of each arc type are concentrated in a narrow region as depicted in Figure~\ref{fig:run_example:transformed_triangles}.
	First we apply an appropriate homeomorphism on the drawing so that $\triangle$ becomes a proper triangle ($\triangle'$ will be treated in an analogous manner); see again Figure~\ref{fig:run_example:original_triangles} and Figure~\ref{fig:run_example:arc_type_triangles}. 
	For the arc type $\alpha\beta$  
	we place a small rectangular region $R_{\alpha\beta}$ within $\triangle$ that lies close to the vertex~$c$.
	We now redraw all pc-arcs of type $\alpha\beta$ so that 
	\begin{itemize}
	    \item all crossings and touchings between pc-arcs of type $\alpha\beta$ lie inside~$R_{\alpha\beta}$,
	    \item every pc-arc of type $\alpha\beta$ intersects $R_{\alpha\beta}$ on opposite sites,
	    and
	    \item
	    for every pc-arc of type $\alpha\beta$,
	    the removal of~$R_{\alpha\beta}$ leaves two straight line segments
	    which connect $R_{\alpha\beta}$ to $\alpha$ and $\beta$ (i.e., the boundary segments of~$\triangle$).
	\end{itemize}
	We proceed analogously for the arc types $\alpha\gamma$ and $\beta\gamma$.
	By Claim~\ref{claim:same_arc_type} touchings and double crossings only occur between pc-arcs of the same type and therefore lie in the rectangular regions.
	Since the rectangular regions are placed close enough to the vertices $a,b,c$ of the triangle~$\triangle$,
	no additional intersections or touching points are introduced and we obtain an arrangement~$\AA'$ of pseudocircles with the same intersections and touchings as~$\AA$.
	The combinatorics of the resulting arrangement $\AA'$ may however differ from $\AA$ since the transformation typically changes the intersection orders of the pseudocircles. We conclude:
	
	\begin{observation}
	    The transformation preserves the incidence relation between any pair of pc-arcs, that is, 
	    two pc-arcs in $\AA$ are disjoint/cross in one point/cross in two points/touch if and only if the two corresponding pc-arcs in $\AA'$ are disjoint/cross in one point/cross in two points/touch.
	\end{observation}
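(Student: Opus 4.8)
The plan is to verify the claimed invariance one pair of pc-arcs at a time, splitting according to whether the two arcs have the same type or different types and using Claim~\ref{claim:same_arc_type} to limit which interactions can occur in each case. Throughout I use that the redrawing can be carried out so as to fix every arc-endpoint on $\alpha,\beta,\gamma$ (resp.\ $\alpha',\beta',\gamma'$); this is immediate from the construction, since outside the rectangular regions each arc is reconnected to the boundary by straight segments meeting $\triangle$ (resp.\ $\triangle'$) in the original points. In particular the cyclic order of all endpoints along $\partial\triangle$ is preserved. The basic tool is a parity invariant: for two simple arcs in the closed disk $\triangle$ with four distinct endpoints on $\partial\triangle$, the number of transversal crossings is odd precisely when the two pairs of endpoints interleave along $\partial\triangle$, and even otherwise (a standard consequence of the Jordan curve theorem; a touching contributes $0$ to this count). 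Since the transformation fixes all endpoints, it preserves every such interleaving pattern, hence the crossing parity of every pair of pc-arcs.

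For a pair of different type, Claim~\ref{claim:same_arc_type} forbids touchings and double crossings, so the two arcs cross at most once; together with the preserved parity this determines the crossing number exactly (one crossing if the endpoints interleave, none otherwise), in $\AA$ and in $\AA'$ alike. The same argument disposes of every same-type pair whose endpoints interleave: the parity is odd, and the two pc-arcs belong to two pseudocircles, which cross in at most two points, so the count is forced to be exactly one both before and after. Note that, because these cross-type and interleaving relations are pinned down by the endpoints alone, they are automatically preserved by any redrawing that fixes the endpoints; this frees us to modify the arcs inside the triangles without fear of disturbing them.

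The remaining and main case is a same-type pair with non-interleaving endpoints. Here the parity is even, so the pair is disjoint, crosses twice, or touches, and---unlike the previous cases---the interaction is not forced by the endpoints (in the disk $\triangle$ such a pair is isotopic rel endpoints to disjoint chords, so a double crossing or touching could in principle be cancelled). The transformation must therefore preserve these interactions by design, which is the purpose of the regions into which, by Claim~\ref{claim:same_arc_type}, all touchings and double crossings of same-type arcs fall. I would obtain the redrawing of the type-$\alpha\beta$ arcs by an isotopy that \emph{relocates}, rather than cancels, all of their mutual crossings and touchings into $R_{\alpha\beta}$; choosing the isotopy to slide crossings along the arcs (never applying a cancelling move) preserves every pairwise intersection number. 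Afterwards all same-type interactions lie in $R_{\alpha\beta}$, so the exterior portions are pairwise disjoint and may be straightened to the prescribed segments without introducing crossings, and each arc meets $R_{\alpha\beta}$ on opposite sides. The main obstacle is to perform this relocation simultaneously and consistently for all arcs of the type---gathering a possibly intricate pattern of nested lenses and tangencies into the single region---while leaving room to do the same for the other two types in $\triangle$ and for all three types in $\triangle'$. Because the cross-type and interleaving relations settled above depend only on the fixed endpoints, they are untouched by these interior modifications, so in the end every pairwise relation---disjointness, single crossing, double crossing, or touching---is reproduced in $\AA'$, which is the assertion of the observation.
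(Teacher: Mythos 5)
Your proof is correct in substance, but it takes a noticeably different route from the paper, which essentially treats the Observation as built into the construction: the paper's entire justification is that, by Claim~\ref{claim:same_arc_type}, all touchings and double crossings occur between same-type arcs and hence land inside the rectangular regions (where they are reproduced by design), while placing the regions close to the corners ensures no additional intersections are introduced. The paper never discusses why single crossings between arcs of \emph{different} types are neither lost nor created; your contribution is to close exactly that gap with the Jordan-curve parity argument: since the transformation fixes all arc endpoints on $\partial\triangle$, the interleaving pattern is preserved, and for any pair that (by Claim~\ref{claim:same_arc_type}) cannot touch or cross twice, the crossing number is completely determined by interleaving. This rigidity observation--that only same-type, non-interleaving pairs carry information not pinned down by the endpoints, so only those must be preserved ``by hand'' inside the regions--is a sharper decomposition than the paper offers, and it makes transparent why concentrating same-type interactions suffices. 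Two caveats: first, when you invoke Claim~\ref{claim:same_arc_type} ``in $\AA'$ alike'' you are implicitly assuming $\AA'$ is already a valid intersecting arrangement, which is part of what is being established; this is easily repaired by noting that in $\AA'$ cross-type arcs are near-straight chords routed through small regions at distinct corners, hence geometrically cross at most once. Second, your ``relocation isotopy'' for the same-type case is described rather than constructed, and you flag the simultaneous relocation of all arcs of a type as the main obstacle without resolving it--but this is precisely the level of informality the paper itself accepts (it points to a figure), so it does not put you below the paper's own standard of rigor.
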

	
	This implies that $\AA'$ is indeed again an arrangement of $n(\AA')=n(\AA)$ pairwise intersecting pseudocircles with identical touching graph $\TG(\AA')=\TG(\AA)$. In particular, the number of touchings is preserved.

\begin{figure}[htb]
    \centering
    \includegraphics[page=1]{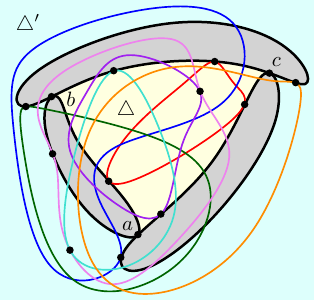}
    \caption{Original arrangement~$\AA$}
    \label{fig:run_example:original_arrangement}
\end{figure}

\begin{figure}[htb]
    \centering
    \begin{subfigure}[b]{.49\textwidth}
        \centering
        \includegraphics[page=2]{figs/redrawing_running_example.pdf}
        \caption{}
        \label{fig:run_example:inside_original}
    \end{subfigure}
    \hfill
    \begin{subfigure}[b]{.49\textwidth}
        \centering
        \includegraphics[page=5]{figs/redrawing_running_example.pdf}
        \caption{}
        \label{fig:run_example:outside_original}
    \end{subfigure}
    \caption{
        Area~$\triangle$~\subref{fig:run_example:inside_original} and area~$\triangle'$~\subref{fig:run_example:outside_original}, colors as in~$\AA$.
    }
    \label{fig:run_example:original_triangles}
\end{figure}

\begin{figure}[htb]
    \centering
    \begin{subfigure}[b]{.49\textwidth}
        \centering
        \includegraphics[page=3]{figs/redrawing_running_example.pdf}
        \caption{}
        \label{fig:run_example:inside_colored}
    \end{subfigure}
    \hfill
    \begin{subfigure}[b]{.49\textwidth}
        \centering
        \includegraphics[page=6]{figs/redrawing_running_example.pdf}
        \caption{}
        \label{fig:run_example:outside_colored}
    \end{subfigure}
    \caption{
        Area~$\triangle$~\subref{fig:run_example:inside_transformed} and area~$\triangle'$ \subref{fig:run_example:outside_transformed}, colors according to arc-type.
    }
    \label{fig:run_example:arc_type_triangles}
\end{figure}

\begin{figure}[htb]
    \centering
    \begin{subfigure}[b]{.49\textwidth}
        \centering
        \includegraphics[page=4]{figs/redrawing_running_example.pdf}
        \caption{}
        \label{fig:run_example:inside_transformed}
    \end{subfigure}
    \hfill
    \begin{subfigure}[b]{.49\textwidth}
        \centering
        \includegraphics[page=7]{figs/redrawing_running_example.pdf}
        \caption{}
        \label{fig:run_example:outside_transformed}
    \end{subfigure}
    \caption{
        Area~$\triangle$~\subref{fig:run_example:inside_transformed} and area~$\triangle'$ \subref{fig:run_example:outside_transformed}. Concentrate all crossings and touchings of one arc type in a narrow region. The~narrow regions are indicated by dashed rectangles.
    }
    \label{fig:run_example:transformed_triangles}
\end{figure}

\begin{figure}[htb]
    \centering
    \includegraphics[page=8]{figs/redrawing_running_example.pdf}
    \caption{A cylindrical drawing of~$\AA'\setminus\KK$}
    \label{fig:run_example:almost_cylindrical_drawing}
    \vspace{0.2cm}
    \includegraphics[page=9]{figs/redrawing_running_example.pdf}
    \caption{Replace each of the three pseudocircles of $\KK$ by two new pseudocircles so that the entire arrangement is now cylindrical. The pseudocircle from Figure~\ref{fig:run_example:almost_cylindrical_drawing} that contains the points~$a$ and~$c$ (resp. points~$c$ and~$b$ / points~$b$ and~$a$) is replaced by a new dark red and a new bright red (resp.\ dark green and bright green / dark blue and turqoise) pseudocircle. The right part shows the corresponding touching graph.}
    \label{fig:run_example:extended_to_cylindrical_drawing}
\end{figure}

    

	\begin{claim}
	\label{claim:cylindrical}
	The arrangement induced by $\AA' \setminus \KK$ is cylindrical.
	\end{claim}
	
    \noindent\textit{Proof of Claim~\ref{claim:cylindrical}.} For each pseudocircle $C\in\AA'\setminus\KK$, the intersection $$C \cap (\triangle \cup \triangle') = (C \cap \triangle) \cup (C \cap \triangle')$$
	consists of three pc-arcs,
	and each of these three pc-arcs is of a different type.
	The first arc is of type $\alpha\beta$ or $\alpha'\beta'$ (depending on whether it is inside $\triangle$ or $\triangle'$), 
	the second is of type $\beta\gamma$ or~$\beta'\gamma'$, and 
	the third is of type $\alpha\gamma$ or~$\alpha'\gamma'$.

	Now we redraw $\AA'$ on a cylinder 
	as illustrated in  Figure~\ref{fig:run_example:almost_cylindrical_drawing}.
	Since all crossings and touchings of the arc type
	are within a small region,
	all pseudocircles from $\AA' \setminus \KK$ wrap around the cylinder,
	and hence the arrangement induced by $\AA' \setminus \KK$ is cylindrical.	
 
	This completes the proof of Claim~\ref{claim:cylindrical}.\qedclaim\medskip
	
	Next we replace the three pseudocircles of $\KK$ 
	by six pseudocircles 
	as illustrated in the left part of Figure~\ref{fig:run_example:extended_to_cylindrical_drawing},
	so that the resulting arrangement $\AA''$ 
	is cylindrical. Each of the three touching points $a,b,c$ in~$\KK$ 
	is replaced by two new touching points and altogether we obtain  touchings~$a',a'',b',b'',c',c''$.
	Hence, when transforming $\AA$ into $\AA''$, the number of pseudocircles is increased by~3 and the number of touchings is also increased by~3. 
	
	Agarwal~et~al.~\mbox{\cite{AgarwalNPPSS2004}} proved the $p_2 \leq 2n-3$ upper bound on the number of touchings in cylindrical arrangements of $n$ pairwise intersecting pseudocircles 
	by bounding the number of touchings in an arrangement of pairwise intersecting pseudoparabolas.
	They show that their touching graph is planar and bipartite~\mbox{\cite[Theorem 2.4]{AgarwalNPPSS2004}}, hence, it has at most $2n-4$ edges. The difference between $2n-4$ and $2n-3$ comes from the fact that the upper or the lower face in the pseudoparabola drawing of a pseudocircle arrangement $\AA$ can be a digon of $\AA$.
    
	The drawing of~$\AA''$ in Figure~\ref{fig:run_example:extended_to_cylindrical_drawing} can be seen as an intersecting
    arrangement of pseudoparabolas. The complexity of the upper and of the lower face is three, hence, the arrangement has at most $2n(\AA'')-4$ touchings.
    
    We review the ideas of the proof of Agarwal~et~al.~\mbox{\cite{AgarwalNPPSS2004}} to verify the following claim.
	
	\begin{claim}
	\label{claim:planar_bipartite}
	$\TG(\AA'')$ is planar, bipartite, and has at most $2n(\AA'')-5$ edges.
	\end{claim}
	
    \noindent\textit{Proof of Claim~\ref{claim:planar_bipartite}.} Label the pseudoparabolas $P_1,\ldots,P_n$ such that the 
    starting segments are ordered from top to bottom.
    In the touching graph $\TG(\AA'')$, 
    we label the corresponding vertices as~$1,\ldots,n$.
    
    \textbf{Bipartiteness:}
    The bipartition comes from the fact that
    the digons incident to a fixed pseudoparabola $P_j$
    are either all from below or all from above.
    Suppose that 
    a pseudoparabola~$P_j$ has a touching from above with $P_i$ and from below with~$P_k$. It follows that $P_i$ is above $P_j$ everywhere and $P_k$ is below $P_j$ everywhere. Hence, $P_i$ and $P_k$ are separated by $P_j$ and cannot intersect -- this contradicts the assumption that the pseudocircles are pairwise intersecting.
	
	We now further observe that
	the uppermost pseudoparabola~$P_1$ and the lowermost pseudoparabola~$P_n$ belong to distinct parts of the bipartition, because
	$P_1$ has all touchings below (i.e. with parabolas of greater index);
	$P_n$ has all touchings above (i.e. with parabolas of smaller index).
	Hence, the touching graph remains bipartite after adding the edge~$\{1,n\}$.

	\textbf{Planarity:}
	For the planarity of $\TG(\AA'')$,
	Agarwal et al.\ \cite{AgarwalNPPSS2004}
	create a particular drawing:
	The vertices are drawn on a vertical line and
	each edge $e=\{u,v\}$ is drawn as an~$y$-monotone curve
	according to the following \emph{drawing rule}:
	For each $w$ with $u<w<v$,
	we route $e$ to the left of~$w$ 
	if the pseudoparabola $P_w$ intersects $P_u$ before $P_v$, otherwise we draw the edge $e$ right of~$w$.
	It is then shown that in the so-obtained drawing~$\DD$, 
	each pair of independent edges 
	has an even number of intersections. The right part of Figure~\ref{fig:run_example:extended_to_cylindrical_drawing} shows such a drawing of the corresponding touching graph. The Hanani--Tutte theorem (cf.\ Section~3 in \cite{Schaefer2013}) 
	implies that~$\TG(\AA'')$ is planar.
	
	Notice that $\{1, n\}$ is not an edge in $\TG(\AA'')$, since by construction, the lowermost and uppermost pseudocircles do not touch. We further observe that,
	since all edges in~$\DD$ are drawn as $y$-monotone curves,
	the entire drawing lies in 
	a box which is bounded from above by vertex~$1$ 
	and from below by vertex~$n$.
	Hence, we can draw an additional edge from $1$ to $n$ 
	which is routed entirely outside of the box and does not intersect any other edge.
	Again, by the Hanani--Tutte theorem, we have planarity.
	Since any planar bipartite graph on $n$ vertices 
	has at most $2n-4$ edges,
	we conclude that $\TG(\AA'')$ has at most $2n-5$ edges.

 This completes the proof of Claim~\ref{claim:planar_bipartite}.\qedclaim\medskip
	
	We are now ready to finalize the proof of Theorem~\ref{thm:digons}.
	  From Claim~\ref{claim:planar_bipartite} and $n(\AA'') = n+3$ we get~$ p_2(\AA'') \le 2(n+3)-5$. Since $p_2(\AA'') = p_2(\AA)+3$ this 
        implies~$p_2(\AA) \le 2n-2$, which is the desired bound.
\end{proof}

	\section{Proof of Proposition~\ref{prop:trianglefree_tight}}
	\label{sec:prop:trianglefree_tight}

The proof of Proposition \ref{prop:trianglefree_tight} is based on the \textit{blossom operation}, which allows to dissolve triangles in the touching graph. We will apply the blossom operation to arrangements whose touching graphs are wheel graphs to obtain arrangements with the  desired properties.

    Let $\AA$ be an arrangement of pairwise intersecting pseudocircles, let~$v$ be a pseudocircle in~$\AA$, and let $w_1, \dots, w_d$ be the pseudocircles in $\AA$ which form touchings with~$v$ in this particular circular order along~$v$. Since $\AA$ is intersecting, all the touchings are on the same side of~$v$. As illustrated in Figure~\ref{fig:blossom_operation_circles}, the blossom operation relaxes the touchings between~$v$ and~$w_1,\ldots,w_d$ to digons and inserts~$d$ new pseudocircles~$v_1', \dots, v_d'$ inside and very close to~$v$ so that 
    \begin{itemize}
        \item 
        $v_1',\ldots,v_d'$ form a cylindrical arrangement,
        \item 
        $v$ touches~$v_1', \dots, v_d'$, and
        \item 
        $w_i$ touches $v_{i-1}'$ and $v_{i}'$	(indices modulo $d$).
    \end{itemize}
	
    Since the new pseudocircles $v_1',\ldots,v_d'$ are added in an~\mbox{$\varepsilon$-small} area close to~$v$,  it is ensured that each $v_i'$ intersects all other pseudocircles. Hence, the obtained arrangement is again an arrangement of pairwise intersecting pseudocircles.  
	
    \begin{figure}[htb]	
        \centering
        \includegraphics{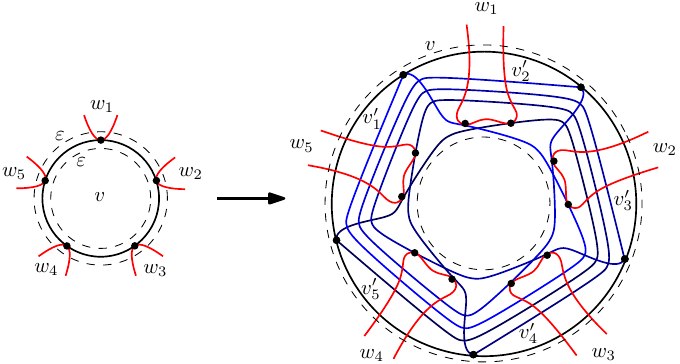}
        \caption{An illustration of the blossom operation applied on the pseudocircle $v$ of an arrangement.
        }
        \label{fig:blossom_operation_circles}
    \end{figure}
	
	Figure~\ref{fig:blossom_operation_graph} shows the effect of the blossom operation on the touching graph. Note that in these graph drawings the circular orders of the edges incident to a vertex coincide with the orders in which the touchings appear on the corresponding pseudocircle. 
	
	\begin{figure}[htb]    	
    	\centering
    	\includegraphics{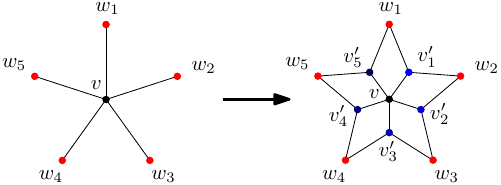}
    	\caption{Blossom operation applied on $v$: Modification of the touching graph.}
    	\label{fig:blossom_operation_graph}
	\end{figure}
	
	The blossom operation increases the number $n(\AA)$ of pseudocircles of arrangement~$\AA$ by~$d$ and the number~$p_2(\AA)$ of touchings by $2d$. Hence, when applied to an arrangement $\AA$ with exactly~$p_2(\AA) = 2n(\AA) - 2$ touchings, the blossom operation again yields an arrangement $\AA'$ with $p_2(\AA') = 2n(\AA') - 2$ touchings.
	
	The blossom operation can be used to eliminate triangles in the touching graph. Assume pseudocircles $w_i$ and $w_j$ touch, 
 hence $v, w_i, w_j$ form a triangle in the touching graph. Then the blossom operation on $v$ destroys this triangle without creating a new one if and only if, along the pseudocircle~$v$, the two touchings with $w_i$ and $w_j$ are not consecutive.  In Figure~\ref{fig:blossom_operation_graph} a triangle $\{v, w_1, w_2\}$ would result in the new triangle $\{v_1', w_1, w_2 \}$, while a triangle~$\{v, w_1, w_3\}$ would be eliminated without replacement.
	
	\medskip
	
	Using the blossom operation, 
	we are now able to prove Proposition \ref{prop:trianglefree_tight}.

 \propTriangleFreeTight*
\begin{proof}
    Let $n' \ge 11$ be an integer with $n' \equiv 3 \pmod{4}$.
    Then $n = \frac{n'+1}{2}$ is an even integer with~$n \ge 6$.
    As illustrated in  Figure~\ref{fig:wheelexample_A} and Figure~\ref{fig:wheelexample_B},
    we can construct an
    arrangement~$\AA$ of~$n$ pseudocircles with 
    \mbox{$p_2=2n-2$} touchings such that the touching graph~$T(\AA)$ is the wheel graph~$W_n$.
    
    \begin{figure}[htb]
        \centering
        
        \hbox{}
        \hfill
        \begin{subfigure}[b]{.48\textwidth}
            \centering
            \includegraphics[page=2,width=0.8\textwidth]{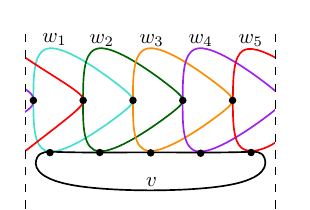}
            \caption{}
            \label{fig:wheelexample_A}
        \end{subfigure}
        \hfill
        \begin{subfigure}[b]{.48\textwidth}
            \centering
            \includegraphics[page=1,width=0.8\textwidth]{figs/wheelexample_pseudocircles.pdf}
            \vspace{0.6cm} 
            \caption{}
            \label{fig:wheelexample_B}
        \end{subfigure}
        \hfill
        
        \hbox{}
        \hfill
        \begin{subfigure}[b]{.48\textwidth}
            \centering
            \includegraphics{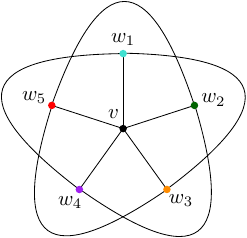}
            \caption{}
            \label{fig:wheelexample_C}
        \end{subfigure}
        \hfill
        \begin{subfigure}[b]{.48\textwidth}
            \centering
            \includegraphics{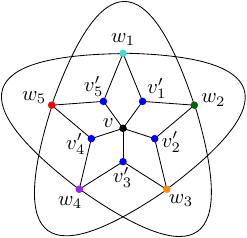}
            \caption{}
            \label{fig:wheelexample_D}
        \end{subfigure}
        \hfill
        
        \caption{
            \subref{fig:wheelexample_A}~An arrangement $\AA$ of $6$ pseudocircles, 
            \subref{fig:wheelexample_B}~its cylindrical representation,
            \subref{fig:wheelexample_C}~its touching graph $T(\AA)$, and
            \subref{fig:wheelexample_D}~the touching graph $T(\AA')$ after applying the blossom operation to~$v$.
        }
        \label{fig:wheelexample}
    \end{figure}
    
    In this construction
    the \emph{central} pseudocircle $v$ has a touching with each of the pseudocircles~$w_1,\ldots,w_{n-1}$
    and each $w_i$ touches $v$, $w_{i+n/2}$, and $w_{i-n/2}$ (indices modulo~$n-1$);
    see Figure~\ref{fig:wheelexample_C}. 
    
    All triangles in~$\TG(\AA)$ contain the central vertex~$v$
    and
    for each such triangle $\{v, w_i, w_j\}$, the touchings of the pseudocircles $w_i$ and~$w_j$ with the pseudocircle~$v$ are not consecutive on~$v$. 
    Therefore, applying the blossom operation to~$v$ eliminates all triangles and the resulting arrangement $\AA'$ of $n'=2n-1$ pairwise intersecting pseudocircles has $p_2(\AA')=2n'-2$ touchings and a triangle-free touching graph $\TG(\AA')$; see Figure~\ref{fig:wheelexample_D}.
    This completes the argument for~$n' \ge 11$ with $n \equiv 3 \pmod{4}$.
    
    To give a construction for $n''=14$ and for all integers $n'' \ge 17$,
    note that the blossom operation can be applied to pseudocircles with exactly three touchings.
    The constructed examples with $n \equiv 3 \pmod{4}$ have pseudocircles with three touchings and the blossom operation applied to such a pseudocircle preserves the property.
    
    Since $n''=14$ and every integer $n'' \ge 17$ can be written as $n' + 3k$ with \mbox{$n' \in \{11,15,19\}$} and~$k \in \mathbb{N} \cup \{0\}$, we obtain arrangements $\AA''$ of $n''$ pseudocircles with $p_2(\AA'')=2n''-2$ touchings. This completes the proof of Proposition~\ref{prop:trianglefree_tight}.
\end{proof}

    \section{Digon-free arrangements with few triangles}
    \label{sec:thm:triangles}
	
    \subsection{Proof of Theorem~\ref{thm:triangles} and Theorem~\ref{thm:extension_to_counterexample}}
    \label{subsec:proof_of_triangle_thms}

    The proofs of Theorem~\ref{thm:triangles} and Theorem~\ref{thm:extension_to_counterexample} are both based on replacing pseudocircles by canonical bundles of $4$ pseudocircles, as shown in Figure~\ref{fig:bundle_operation}. Like in the blossom operation, the new pseudocircles are placed within an $\varepsilon$-small area around the replaced pseudocircle so that the intersecting property of the pseudocircles is being preserved. We call such an operation a \emph{bundle replacement} and aim for performing them on pseudocircles in order to destroy some of their incident triangles.

    \begin{figure}[htbh]
        \centering
        \includegraphics{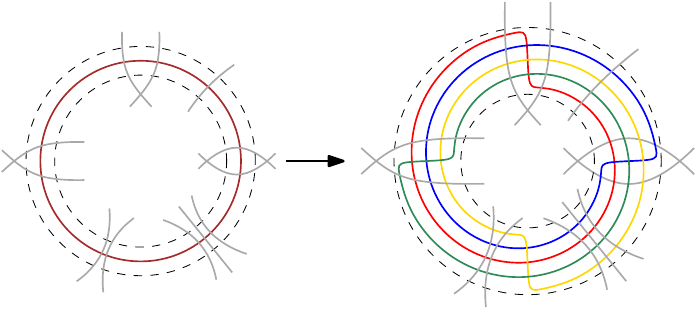}
        \caption{Bundle replacement: a pseudocircle is replaced by a bundle of $4$ pseudocircles.}
        \label{fig:bundle_operation}
    \end{figure}

    The following fact is a direct consequence of a sweepability statement by Snoeyink and Hershberger~\cite{SnoeyinkHershberger1991}.

    \begin{proposition}[{\cite[Lemma 4.1]{SnoeyinkHershberger1991}}]\label{prop:two_incident_triangles}
        In a digon-free arrangement of pairwise intersecting pseudocircles, each pseudocircle is incident to at least two triangles to the inside and two triangles to the outside.
    \end{proposition}

    \thmExtensionToCounterexample*

    \begin{proof}
        Let~$\AA$ be a digon-free arrangement of intersecting pseudocircles.
        Select a pseudocircle~$p$ in~$\AA$. Figure~\ref{fig:bundle} illustrates a situation which might have been obtained via a bundle replacement on~$p$. Note that a bundle replacement leads to a new arrangement~$\AA'$ which is again digon-free and contains $\AA$ as a subarrangement.

        We can think of the new bundle as being composed of four sections that are delimited by the four twists starting and ending in the purple crossings; colors refer to Figure~\ref{fig:bundle}. For a precise description we define a \emph{twist} in a bundle as a sequence of consecutive crossings which make the outermost pseudocircle of the bundle the innermost. In all of our figures we keep the crossings of a twist close together.
        
        Proposition~\ref{prop:two_incident_triangles} guarantees that~$p$ is incident to at least~$4$ triangles. Observe that the four twists in the bundle can always be distributed in such a way that each of these~$4$ triangles becomes incident to one of the twists, hence, the triangles are turned into quadrangles (green cells). On the other hand, in each of the~$8$ red areas, independent of the number of crossing pseudocircles (gray), exactly one new triangle is created.
        
        In total, a careful bundle replacement on~$p$ leads to  a digon-free arrangement~$\AA'$ having at most~$p_3(\AA')\leq p_3(\AA)+4$ triangles and which contains~$\AA$ as a subarrangement. This procedure can be iterated. Each iteration increases the number of pseudocircles by~$3$ and the number of triangles by at most~$4$. 

        Let $t$ be the number of triangles of the initial arrangement $\AA$.
        If $m \geq t/3\varepsilon$, then the arrangement $\hat\AA$ obtained 
        through a sequence of $m$ bundle 
        replacements has the claimed property: \[p_3(\hat\AA) \leq t  + 4m \leq (4/3 + \varepsilon)3m < (4/3 + \varepsilon)\hat n .\]
    \end{proof}

    \begin{figure}[htb]
        \centering
        \includegraphics{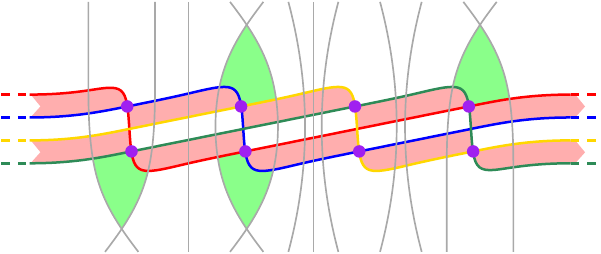}
        \caption{Situation obtained by a bundle replacement.}
        \label{fig:bundle}
    \end{figure}
 
    \begin{figure}[htb]
        \centering
        \includegraphics[width=.66\textwidth]{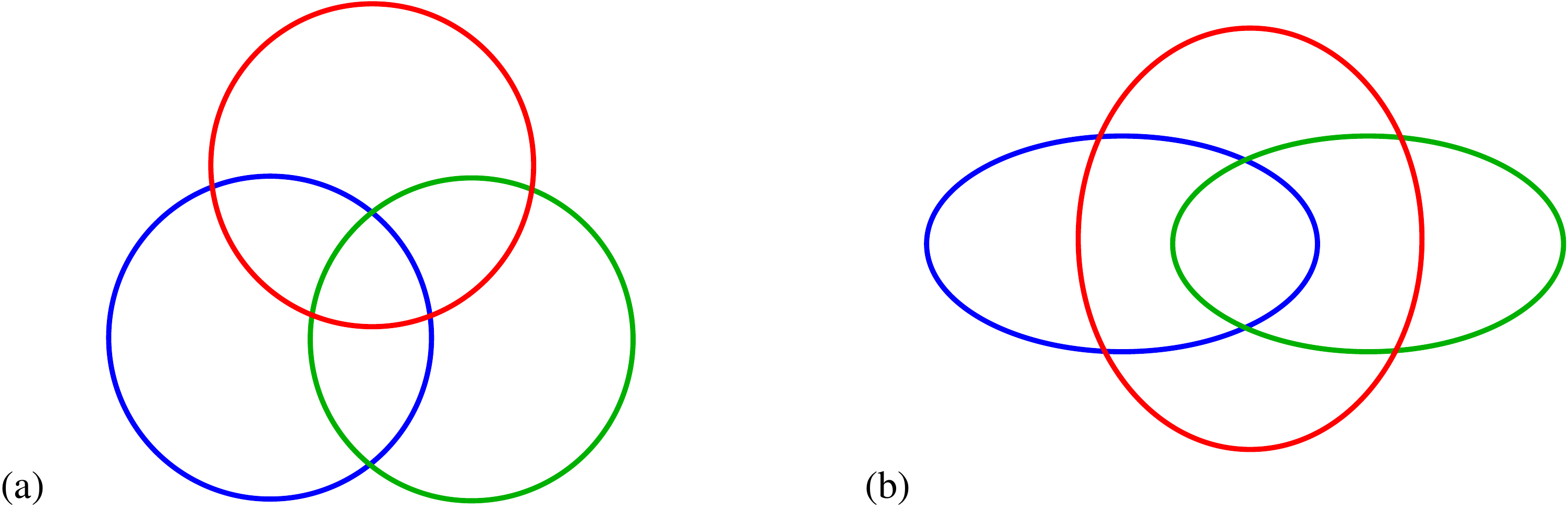}
        \caption{The two intersecting simple arrangements of three pseudocircles. (a) Krupp, (b)~NonKrupp.}
        \label{fig:Krupp}
    \end{figure}

    \thmTriangles*

    \begin{proof} 
    Figure~\ref{fig:Krupp} shows the two intersecting simple arrangements of three pseudocircles. The Krupp is digon free and has 
    8 triangle cells. Note that the Krupp can be obtained from a single isolated circle in a bundle replacement step with a bundle of size three. Replacing any of the circles of the Krupp
    with a bundle of size four we can convert all the original triangles to 4-gons while generating eight new triangles. The result is the arrangement~$\AAsixA$ from Figure~\ref{fig:AAsixA} with~6~pseudocircles and eight triangles. Starting from~$\AAsixA$,
    we can iterate the bundle replacement with bundles of size four; this yields a family of arrangements~$\AA_{3k}$ with~$n=3k$ pseudocircles and~$4k = \frac{4}{3}n$ triangles.  

  \begin{figure}[htb]
        \centering
        \includegraphics[page=2,width=0.35\textwidth]{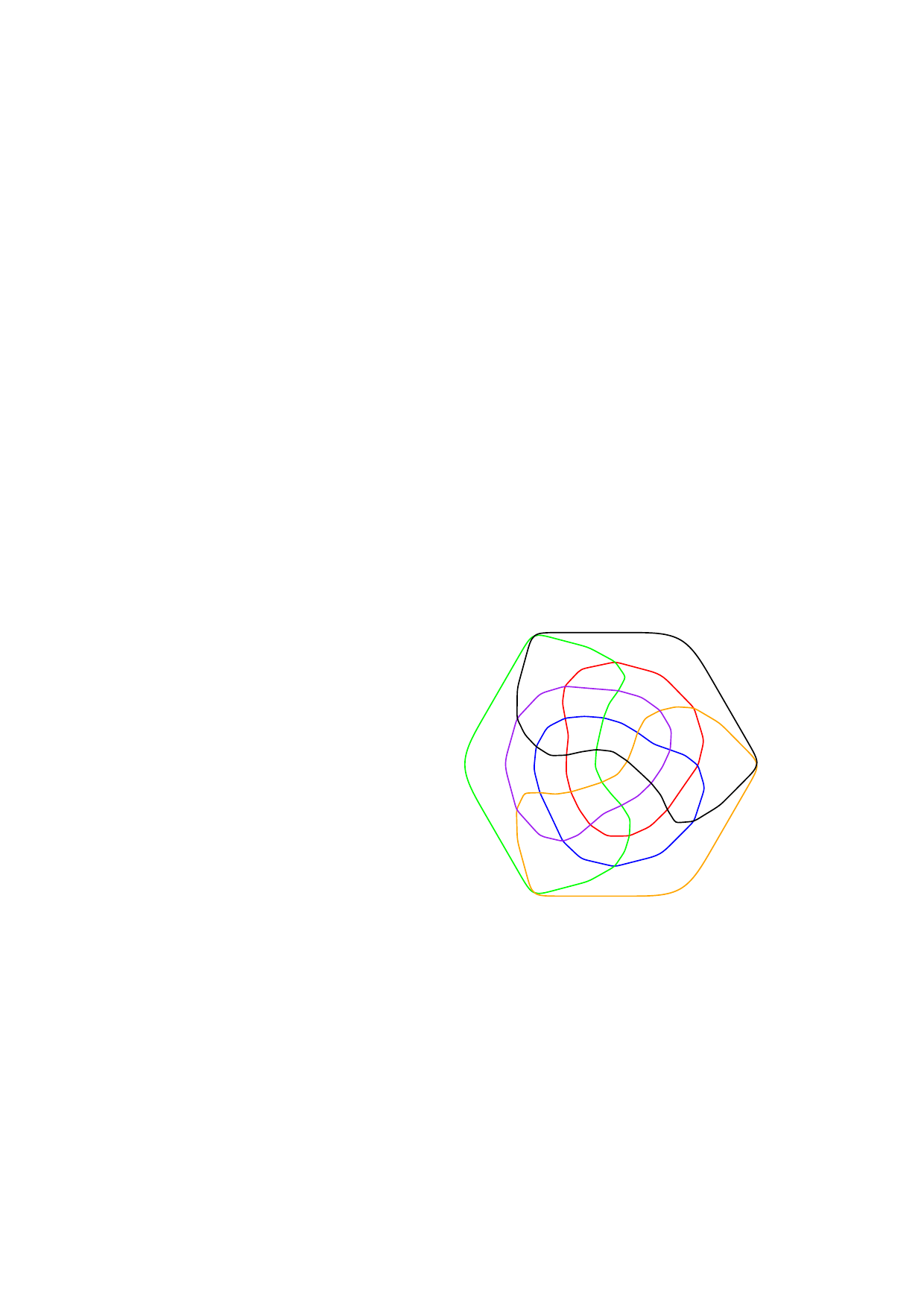}
        \caption{The non-circularizable digon-free intersecting arrangement $\AAsixA$. 
        \label{fig:AAsixA}}
    \end{figure}

    For values of $n$ which are not divisible by 3, we can take the arrangement~$\AA_{3k}$ with~$k= \lfloor\frac{n}{3}\rfloor$ and apply a bundle replacement step with a bundle of size two (for~$n=3k+1$) or three (for~$n=3k+2$). In the first case, we can eliminate two old triangles at the cost of creating four new ones; hence, the new arrangement~$\AA_{3k+1}$ has~$n=3k+1$ pseudocircles and~$4k +2 = \lceil\frac{4}{3}n\rceil$ triangles. In the second case,  
    we can eliminate three old triangles at the cost of creating six new ones and obtain a new arrangement $\AA_{3k+2}$ with $n=3k+2$ pseudocircles and~$4k +3 = \lceil\frac{4}{3}n\rceil$ triangles.
    \end{proof}

\subsection{Proof of Proposition~\ref{prop:nosubN6family}}
\label{ssec:no-N6}

In this subsection we construct intersecting digon-free arrangements with few triangles (less than $2n-4$) and no $\AAsixA$ subarrangement. The key to the construction is again the replacement of pseudocircles of a base arrangement by bundles. In the following proof we use bundles of size 3. In the discussion we will also mention size 4 and larger sizes.

\propNoSubNSixFamily*

\begin{proof}
    For any $N\geq 2$, let $\mathcal{A}_N$ be the  arrangement of pseudocircles~$C_1, \cdots, C_N$ 
    such that the cyclic order of intersection of $C_i$ with the other pseudocircles is:
    $$
    1,2,\ldots,i-1,i+1,\ldots,N,
    N,N-1,\ldots,i+1,i-1,\ldots,1.
    $$
    The arrangement can be represented with axis-parallel rectangles so that for all $i < j$ the right side of $C_i$ cuts vertically through $C_j$; see Figure~\ref{fig:n6free_construction:without_bundles} for $\mathcal{A}_4$. Note that every triple of three pseudocircles of $\mathcal{A}_N$ induces a NonKrupp. The faces of $\mathcal{A}_N$ are:
    3 digons,~$2(N-2)$ triangles,~$N(N-3)+2$ four-gons, and a single~$(2N-2)$-gon.

  On the basis of $\mathcal{A}_N$ we construct an arrangement $\mathcal{B}_N$ by bundle replacements. 

\begin{figure}[htb]
    \centering
    
    \begin{subfigure}[b]{.34\textwidth}
        \centering
        \includegraphics[page=2,scale=0.8]{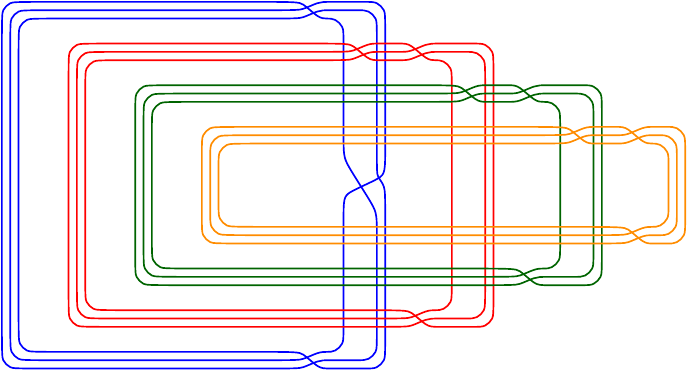}\vspace{0.6cm}
        \caption{The arrangement $\mathcal{A}_4$ }
        \label{fig:n6free_construction:without_bundles}
    \end{subfigure}
    \hspace{0.3cm}
    \begin{subfigure}[b]{.59\textwidth}
        \centering
        \includegraphics[page=1,scale=0.65]{figs/n6free_construction.pdf}
        \caption{The arrangement $\mathcal{B}_4$.}
        \label{fig:n6free_construction:with_bundles}
    \end{subfigure}
    
    \caption{Replacing the pseudocircles with bundles so that the twists destroy all digons.}
    \label{fig:n6free_construction}
\end{figure}
   
   From $\mathcal{A}_N$ we obtain the arrangement $\mathcal{B}_N$ by successively replacing $C_1, \cdots, C_N$ by bundles~$B_1, \cdots, B_N$ of size~$3$
    where we place the twists according to the following rules; Figure~\ref{fig:n6free_construction:with_bundles} shows the arrangement~$\mathcal{B}_4$ obtained by using the base arrangement $\mathcal{A}_4$ from Figure~\ref{fig:n6free_construction:without_bundles}.
    \begin{itemize}
        \item The bundle replacing $C_1$, has two twists outside of $C_2$ and one twist inside of $C_N$.
        \item The bundle $B_i$ replacing $C_i$, $1<i<N$, has
        two twists which increase the degree of each of the two triangles formed by $B_{i-1}$, $C_i$ and $C_{i+1}$. The third twist is within $B_{i-1}$
        such that it increases the degree of a triangle corresponding to a twist of $B_{i-1}$.
        \item Finally, when replacing $C_N$, place two consecutive twist which increase the degree of the digon formed by $B_{N-1}$ and $C_N$ and
        one twist in a triangle corresponding to a twist of~$B_{N-1}$.
    \end{itemize}
    The arrangement~$\mathcal{B}_N$ consists of $n=3N$ pseudocircles. For counting the triangles, each bundle~$B_i$ has $6$ internal triangles, but for the bundles $B_1, \cdots, B_{N-1}$ one of them is dissolved by a twist of the next bundle $B_{i+1}$.
    Additionally, the digon formed by $C_1$ and $C_N$ has been made a triangle. The total number of triangles of  $\mathcal{B}_N$ is
    \[ 
    6N - (N-1) + 1 = 5N+2 = \frac{5}{3}n + 2.
    \] 
    The arrangement $\mathcal{B}_N$ is digon-free. It remains to show that there is no subarrangement of~$\mathcal{B}_N$ which is isomorphic to~$\AAsixA$. 
    
    Since every triple of pseudocircles of~$\mathcal{A}_N$ induces a NonKrupp, the same is true for triples of pseudocircles taken from distinct bundles of~$\mathcal{B}_N$.
    A~$\AAsixA$ arrangement has exactly $4$ triples that form a NonKrupp and each of the six pseudocircles is member of exacly two of them.
    
    Let~$\mathcal{B}'$ be a subarrangement of $\mathcal{B}_N$ consisting of~$6$ pseudocircles. If~$\mathcal{B}'$ contains pseudocircles of at least four different bundles then it has strictly more than 4 NonKrupps, hence, it cannot be isomorphic to~$\AAsixA$. Now assume that~$\mathcal{B}'$ consists of~$k_1, k_2, k_3\geq 0$ pseudocircles from three pairwise different bundles,~$k_1 + k_2 + k_3 = 6$.

    \begin{claim}
        If $k_i>0$ for~$i=1,2,3$, then~$\mathcal{B}'$ is not isomorphic to~$\AAsixA$.
    \end{claim}

    \noindent\textit{Proof of Claim~1.}
    Each triple of pseudocircles of pairwise different bundles forms a NonKrupp, which implies that~$\mathcal{B}'$ contains at least $k_1\cdot k_2 \cdot k_3$ NonKrupp subarrangements. As~$k_1+k_2+k_3=6$, this value is~$4$,~$6$, or $8$.  If $k_1\cdot k_2\cdot k_3=4$, two of the $k_i$ are equal to $1$ and the two corresponding pseudocircles participate in all 4 NonKrupp subarrangements of $\mathcal{B}'$. Hence,
    in all cases $\mathcal{B}'$ and~$\AAsixA$ are not isomorphic.
    \qedclaim\medskip
    
    It follows that if~$\mathcal{B}'$ is isomorphic to $\AAsixA$, then it must contain the three pseudocircles of each of two bundles, i.e., $k_1=k_2=3$.

    \begin{claim}
        If $\mathcal{B}'$ is a subarrangement consisting of two complete bundles of $\mathcal{B}_N$, then $\mathcal{B}'$ is not isomorphic to $\AAsixA$.
    \end{claim}

    \noindent\textit{Proof of Claim~2.}
    From the bundle structure of~$\mathcal{B}'$ we get~$6$ triangles in each of the two bundles. A~twist of one bundle placed between consecutive pseudocircles of the other bundle can destroy one triangle of the second bundle. Such a twist corresponds to one of the two crossings in the underlying arrangement of two circles. Hence the arrangement~$\mathcal{B}'$ has at least 10 triangles, while~$\AAsixA$ has only~$8$.
    \qedclaim
\smallskip

    This excludes the existence of a subarrangement~$\mathcal{B}'$ isomorphic to $\AAsixA$.
\end{proof}

We now sketch how the constant $5/3$ of Proposition~\ref{prop:nosubN6family} could be
replaced by the smaller constant $3/2$. 

Again we take the arrangement $\mathcal{A}_N$ as a basis but now we replace each pseudocircle with a bundle of size 4. This can lead to
an intersecting digon-free arrangement with 
$n=4N$ pseudocircles  and \[
    (8N + 2(N-2) +6) - 4N = 6N+2 = \frac{3}{2}n+2
\] triangles;
the count is as follows: each of the~$4N$ twists can increase the degree of a triangle or digon by one,
the initial arrangement $\mathcal{A}_N$ has 3 digons and $2(N-2)$ triangles. 
To achieve~$\AAsixA$-freeness with bundles of size 4 the twists have to be placed with some care, and the analysis that the result is indeed $\AAsixA$-free requires the analysis of a lot of cases.
In fact~$\AAsixA$ can be obtained from the arrangement with two circles by replacing one with a bundle of size 2 and the other with a bundle of size 4 if the twists are placed in a specific way.

\medskip

Every arrangement $\AA$ with the property that every triple of pseudocircles forms a NonKrupp can be 
used as the basis for a construction with bundle replacement with bundles of size 3 and/or 4 such that the constructed arrangement 
is intersecting, digon-free, $\AAsixA$-free and has few triangles. 
Using bundles of larger size makes it more challenging to avoid $\AAsixA$-subarrangements,
and we see no way of getting below $\frac{3}{2}n$ triangles with such a construction.

For Conjecture~\ref{conj:weakGB} to be true, it would be necessary that $\AAsixA$-free 
arrangements obtained by bundle replacement with few triangles are non-circularizable. 
With the help of the \texttt{polymake}~\cite{polymake:2000} extension~\texttt{r9n} developed by Julian Pfeifle, we could verify that the arrangement~$\mathcal{B}_3$ with~9 pseudocircles and 17 triangles and the arrangement~${\cal C}_7$ with 7 pseudocircles shown in Figure~\ref{fig:arrangement_c7} are both not circularizable.
We leave the following questions for future research:

\begin{figure}[htb]
    \centering
    \includegraphics{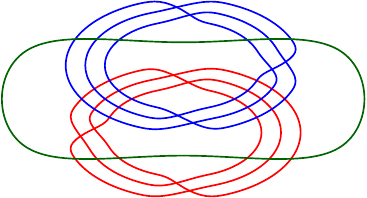}
    \caption{Non-circularizable arrangement $C_7$.}
    \label{fig:arrangement_c7}
\end{figure}

\begin{itemize}
\item What is the minimum number of triangles of 
intersecting, digon-free, $\AAsixA$-free arrangements of pseudocircles? 
\item What is the minimum number of triangles of 
intersecting, digon-free arrangements of circles?
Is it $2n-4$? (Conjecture~\ref{conj:weakGB})
\end{itemize}

    \bibliographystyle{splncs04}
    \bibliography{bibliography}
	
\end{document}